\newtheorem{theorem}{Theorem}  % {name}{style}{number within}
\newtheorem{lemma}[theorem]{Lemma}
\theoremstyle{definition}
\newenvironment{circuit}
  {\crefalias{equation}{circuit}\begin{equation}}
  {\end{equation}\ignorespacesafterend}
\crefname{circuit}{circuit}{circuits}
\Crefname{circuit}{Circuit}{Circuits}
    \newcommand{\half}{\frac{1}{2}} %make \frac{1}{2} with \half
    \newcommand{\define}{\equiv} %definition sign with \define
    \newcommand{\tensor}{\otimes}
    \newcommand{\argdot}{\makebox[1em]{$\cdot$}}
    \newcommand{\idty}{\mathbb{I}}
    \newcommand{\mat}[1]{\begin{bmatrix*}[r]#1\end{bmatrix*}}
    \newcommand{\fourier}{\mathcal{F}}
    \newcommand{\qq}[1]{\quad\text{#1}\quad}
    \newcommand{\vb}[1]{\boldsymbol{\mathrm{#1}}}
    \newcommand{\qunaught}{\varnothing}
\NewDocumentCommand{\braket}{ o m m o }{%
    \ensuremath{%
        \IfValueT{#1}{{}_{#1}} % Prescript
        \langle #2 | #3 \rangle
        \IfValueT{#4}{_{#4}} % Subscript
    }%
}
\NewDocumentCommand{\bra}{ o m }{%
    \ensuremath{%
        \IfValueT{#1}{{}_{#1}} % Prescript
        \langle #2 |
    }%
}
\NewDocumentCommand{\ket}{ m o }{%
    \ensuremath{%
        | #1 \rangle
        \IfValueT{#2}{_{#2}} % Subscript
    }%
}
\newcommand{\qtikzdagger}[1]{\wire[r][1]["\dagger"{above,pos=0.1}, draw=none]{q}}
\newlength{\bswidth}
\newlength{\bsrad}
\tikzset{BSup/.style={
    draw=none,
    minimum width=1.2cm,
    minimum height=1cm,
    path picture={
        \begin{scope}
        \draw[rounded corners=\bsrad] (path picture bounding box.center) +(-0.6, 0.5) -- +(-\bswidth, 0.5) -- +(\bswidth, -0.5) -- +(0.6, -0.5);
        \draw[rounded corners=\bsrad] (path picture bounding box.center) +(-0.6, -0.5) -- +(-\bswidth, -0.5) -- +(\bswidth, 0.5) -- +(0.6, 0.5);
        \filldraw[color=black, fill=white] (path picture bounding box.center) +(-0.4, -0.05) rectangle +(0.4, 0.05);
        \draw[thin, arrows = {-Stealth[inset=1pt, length=3pt, angle'=45, round]}] (path picture bounding box.center) +(0,-0.3) -- +(0,0.3);
        \end{scope}
    },
}}
\tikzset{BSdown/.style={
    draw=none,
    minimum width=1.2cm,
    minimum height=1.2cm,
    path picture={
        \begin{scope}
        \draw[rounded corners=\bsrad] (path picture bounding box.center) +(-0.6, 0.5) -- +(-\bswidth, 0.5) -- +(\bswidth, -0.5) -- +(0.6, -0.5);
        \draw[rounded corners=\bsrad] (path picture bounding box.center) +(-0.6, -0.5) -- +(-\bswidth, -0.5) -- +(\bswidth, 0.5) -- +(0.6, 0.5);
        \filldraw[color=black, fill=white] (path picture bounding box.center) +(-0.4, -0.05) rectangle +(0.4, 0.05);
        \draw[thin, arrows = {Stealth[inset=1pt, length=3pt, angle'=45, round]-}] (path picture bounding box.center) +(0,-0.3) -- +(0,0.3);
        \end{scope}
    },
}}
\newcommand{\bsup}[1]{\gate[2, style={BSup}]{\hspace{5mm}}}  % beamsplitters only align if [row sep={1cm, between origins}] is given to quantikz
\newcommand{\bsdown}[1]{\gate[2, style={BSdown}]{\hspace{5mm}}}  % must be provided after quantikz2
\title{Performance analysis of GKP error correction}
\author[1,2]{Frederik K. Marqversen}
\author[2]{Janus H. Wesenberg}
\author[1,2]{Nikolaj T. Zinner}
\author[3]{Ulrik L. Andersen}
\affil[1]{Department of Physics and Astronomy, Aarhus University, DK-8000 Aarhus C, Denmark}
\affil[2]{Kvantify ApS, DK-2300 Copenhagen S, Denmark.}
\affil[3]{Center for Macroscopic Quantum States (bigQ), Department of Physics, Technical University of Denmark, Fysikvej, 2800 Kgs., Lyngby, Denmark}
\date{May 2025}
\begin{document}
\maketitle
\begin{abstract}
   Quantum error correction is essential for achieving fault-tolerant quantum computing. Gottesman-Kitaev-Preskill (GKP) codes are particularly effective at correcting continuous noise, such as Gaussian noise and loss, and can significantly reduce overhead when concatenated with qubit error-correcting codes like surface codes. GKP error correction can be implemented using either a teleportation-based method, known as Knill error correction, or a quantum non-demolition-based approach, known as Steane error correction. In this work, we conduct a comprehensive performance analysis of these established GKP error correction schemes, deriving an analytical expression for the post-correction GKP squeezing and displacement errors. Our results show that there is flexibility in choosing the entangling gate used with the teleportation-based Knill approach. Furthermore, when implemented using the recently introduced qunaught states, the Knill approach not only achieves superior GKP squeezing compared to other variants but is also the simplest to realize experimentally in the optical domain.
\end{abstract}
\vspace{1em}

\begin{multicols}{2}

\section{Introduction} \label{sec:introduction}

Efficient quantum error correction is essential for mitigating errors in quantum computing and enabling the scaling of quantum hardware to practical, fault-tolerant systems capable of executing advanced quantum algorithms, such as Shor’s algorithm~\cite{shor_algorithms_1994}. Traditional quantum error correction encodes logical qubits in a collection of physical two-level systems~\cite{gottesman_introduction_2009}. An alternative approach leverages bosonic codes, where logical qubits are encoded in multi-level bosonic systems, with prominent examples including two-component cat states and Gottesman-Kitaev-Preskill (GKP) states~\cite{gottesman_encoding_2001}. Bosonic GKP codes are particularly well-suited for correcting continuous noise, such as Gaussian noise and loss, and when concatenated with qubit error-correcting codes, like surface codes, they enable fault-tolerant quantum computing~\cite{menicucci_fault-tolerant_2014,larsen_fault-tolerant_2021,tzitrin_fault-tolerant_2021,noh_fault-tolerant_2020}. A unique advantage of GKP codes lies in their analog syndrome information~\cite{fukui_analog_2017}, which can significantly improve the decoders of concatenated qubit error-correcting codes, thereby reducing computational overhead~\cite{noh_fault-tolerant_2020,noh_low-overhead_2022,tzitrin_fault-tolerant_2021}. This feature positions GKP codes as particularly promising candidates for fault-tolerant quantum computing. Consequently, their implementation has been explored across various physical platforms, including superconducting circuit QED~\cite{sivak_real-time_2023,ni_beating_2023}, trapped ions~\cite{de_neeve_error_2022}, and preliminary attempts in a photonic system~\cite{konno_logical_2024}.

The ideal GKP logical states $\ket{0}_L$ and $\ket{1}_L$ are defined as infinite Dirac combs with a spacing of $2\sqrt{\pi}$ as illustrated on the left-hand side of \cref{fig:gkp states}. The logical $\ket{0}_L$-state has peaks at all even multiples of $\sqrt{\pi}$, while the logical $\ket{1}_L$-state has peaks at all the odd multiples. These states are unphysical since they correspond to infinite-energy states, being composed of an infinite sum of infinitely squeezed states. Realistic GKP states, however, are constrained to finite energies, where the delta functions are replaced by Gaussian peaks of finite width (see \cref{fig:gkp states}, right). Consequently, realistic GKP states are finitely squeezed and thus inherently noisy. Nevertheless, as long as the squeezing level is sufficiently high, fault-tolerant quantum computing remains achievable. For instance, concatenating GKP error correction with surface-code qubit correction achieves fault tolerance at a threshold squeezing level of approximately 10dB~\cite{noh_low-overhead_2022}. This level of squeezing is similar to the recent experimentally obtained result in the MW regime using a circuit QED system~\cite{sivak_real-time_2023}.   

%Practical implementations of GKP qubits, however, are constrained to finite-energy and finite-width approximations \cite{gottesman_encoding_2001}. These approximations, denoted $\ket{0}_\Delta$ and $\ket{1}_\Delta$, are inherently non-ideal as illustrated on the right-hand side of \cref{fig:gkp states}. 

\begin{figure*}
    \centering
    \includegraphics[width=\linewidth]{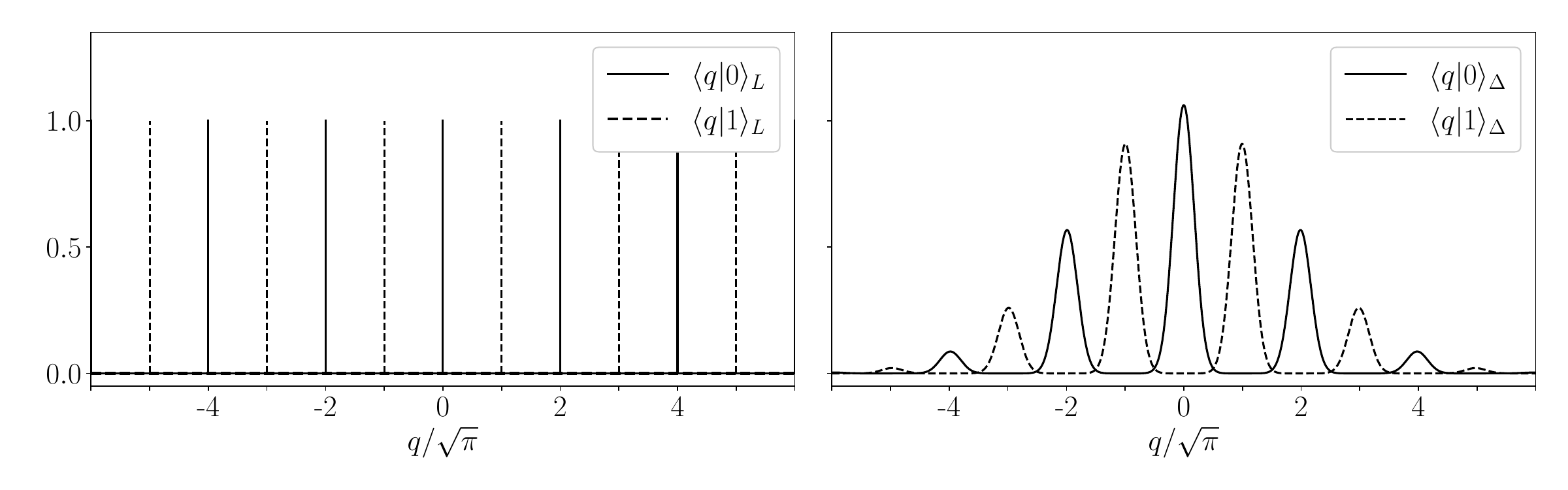}
    \caption{Wave functions in the $q$-quadrature basis $\psi(q) = \braket{q}{\psi}$. \textbf{(Left)} The Logical GKP basis states labelled by $\ket{0}_L$ and $\ket{1}_L$. \textbf{(Right)} Physical states approximating the GKP basis state labelled by $\ket{0}_\Delta$ and $\ket{1}_\Delta$ using the photon number dampening approximation $\ket{\argdot}_\Delta = e^{-\Delta \hat{n}} \ket{\argdot}_L$.}
    \label{fig:gkp states}
\end{figure*}

Currently, there are two predominant approaches to GKP error correction, both of which are measurement-based. The first approach, referred to as Knill-type error correction (see \cref{fig:Knill circuits}), is based on quantum teleportation \cite{gottesman_demonstrating_1999} and was originally developed for general stabiliser codes \cite{knill_scalable_2005} before being adapted for the GKP code~\cite{tzitrin_progress_2020, walshe_continuous-variable_2020}. The standard implementation entangles the input state with a Bell state $\ket{\Phi^+}_L = \frac{1}{\sqrt{2}} (\ket{00}_L + \ket{11}_L)$ using a controlled-X (CX) gate (see \cref{fig:Knill circuits}, left). An alternative implementation, introduced in Ref. \cite{walshe_continuous-variable_2020}, replaces the CX gate with a beam splitter, as illustrated on the right-hand side of \cref{fig:Knill circuits}. While the hardware requirements for these two implementations differ significantly -- particularly in photonic systems where inline squeezing required for the CX gate can be challenging to achieve -- the two circuits are mathematically equivalent as detailed below. Due to this equivalence, both configurations are collectively referred to as the Knill approach. 

The second main approach, known as Steane-type error correction \cite{steane_error_1996}, differs fundamentally from the Knill approach by not relying on teleportation. Instead, it utilizes two consecutive quantum non-demolition measurements involving two ancilla GKP states, along with the CX gate, its inverse, and homodyne detection. The circuit implementation, illustrated on the left-hand side of \cref{fig:Steane}, was first introduced in \cite{steane_error_1996} and later refined in \cite{gottesman_encoding_2001, tzitrin_progress_2020, baragiola_all-gaussian_2019, li_correcting_2023}. Unlike Knill-type error correction, where the input state is teleported and emerges in a different mode, Steane-type error correction keeps the input state within the same mode, which can provide practical advantages depending on the specific implementation. On a more fundamental level, the Knill approach extract the error syndromes for the conjugate quadratures -- amplitude and phase -- simultaneously as a result of the Bell measurement while in the Steane approach obtains the error syndromes sequentially by first conducting a non-demolition measurement of one quadrature followed by a non-demolition measurement of the conjugate quadrature.

In this paper, we provide a comprehensive comparison of the Knill- and Steane-type GKP error correction schemes. Specifically, we demonstrate that the two teleportation-based implementations of Knill-type error correction (illustrated in \cref{fig:Knill circuits}) are mathematically equivalent and, therefore, exhibit identical performance. However, the exact performance of Knill-type error correction is tunable, depending on the choice of the entangled resource state used for teleportation. We show that, for a specific choice of the entangled resource, the Knill- and Steane-type approaches are equivalent. In contrast, for an optimal choice of the entangled resource state -- such as qunaught states~\cite{walshe_continuous-variable_2020} -- Knill-type error correction significantly outperforms Steane-type correction, assuming the same level of squeezing across all resource states. This finding establishes the Steane-type error correction scheme as a non-optimal special case of the Knill-type approach. In our analysis we also derive explicit expressions for the squeezing and displacement errors after correction.  

The paper is organized as follows. \Cref{sec:notation} introduces the foundational notation and operations of GKP states. In \cref{sec:circuit analyses}, we prove the equivalence of the two teleportation-based Knill-type circuits and establish that the Steane-type scheme is a subset of the Knill-type approach. Moreover we present the derivation of the  projection operator for the generalized Knill-type error correction scheme, which is subsequently utilized in \cref{sec:analysis of pi} to find the state after error correction in terms of mean values and variances of the GKP Gaussian peaks. This result is finally used in \cref{sec:error analysis} to evaluate the performance of the Knill-type error correction scheme using two different types of entangled states. 

%However, alternative entangling gates can also be used. In optical systems, a beam splitter is a natural choice for this purpose. Both implementations -- the CX-based and the beam-splitter based circuits -- are shown in \cref{fig:Knill circuits}. We show that these two circuits are \emph{exactly equal}.  As a result, either implementation can be referred to as the Knill approach. The main difference between the two lies in their resource requirements: the CX-based implementation requires inline squeezing, while the beam-splitter-based implementation is often preferable in practical optical setups due to its simplicity and hard-ware efficiency.

\newcommand{\knillCX}{
    \begin{quantikz}[row sep={1cm, between origins}]
        \lstick{$\ket{\psi}$}             & \targ{} \qtikzdagger{} & \meterD{\hat{q}}   & \setwiretype{c} \rstick{$s_1$} \\
        \lstick[2]{$\ket{\Phi^+}_\Delta$} & \ctrl{-1}              & \meterD{\hat{p}}   & \setwiretype{c} \rstick{$s_2$} \\
                                          & \ghost{R}              & \rstick{$\ket{\psi'}$}
    \end{quantikz}
}
\newcommand{\knillBS}{
    \begin{quantikz}[row sep={1cm, between origins}]
        \lstick{$\ket{\psi}$}             & \bsup{}   & \meterD{\hat{q}}  & \setwiretype{c} \rstick{$s_1/\sqrt{2}$} \\
        \lstick[2]{$\ket{\Phi^+}_\Delta$} &           & \meterD{\hat{p}}  & \setwiretype{c} \rstick{$s_2/\sqrt{2}$} \\
                                          & \ghost{R} & \rstick{$\ket{\psi'}$}
    \end{quantikz}
}
\begin{figure*}
    \centering
    \begin{tikzpicture}
        \node[anchor=east] at (-1, 0) {\knillCX};
        \node[anchor=west] at (0.8, 0) {\knillBS};
        \node at (0, 0) {\huge\bfseries =};
        \node[anchor=center] at (-6, 2) {\Large (Knill)};
        \node[anchor=center] at (7, 2) {\phantom{\Large (Knill)}};
    \end{tikzpicture}
    \caption{Circuits for Knill-type error correction based on using either a controlled displacement \textbf{(left)} or a beam splitter \textbf{(right)} for entangling the input with the ancillas.  As demonstrated in \cref{sec:circuit analyses}, both circuits produce identical error-corrected states, as indicated by the equality sign.}
    \label{fig:Knill circuits}
\end{figure*}
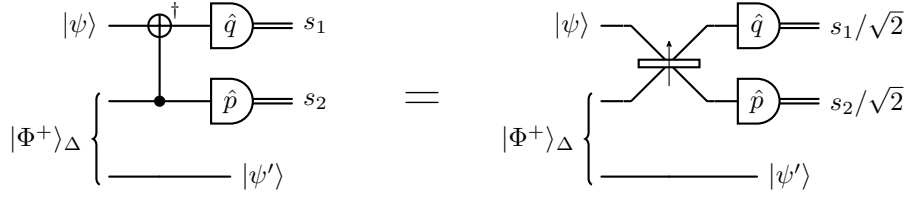

\newcommand{\steane}{
    \begin{quantikz}[row sep={1cm, between origins}]
        \lstick{$\makebox[0pt][l]{\ket{\psi}} \phantom{\ket{0}_\Delta}$}      & \ctrl{1} & \gate{F^\dagger} & \phase{}  & \gate{F}       & \gate{\hat{D}(-\vb{s})} & \rstick{$\ket{\psi'}$} \\
            \lstick{$\ket{0}_\Delta$}  & \phase{} &                        &           & \meterD{\hat{p}} & \setwiretype{c} \wire[u][1]{c} \rstick{$s_1$} \\
        \lstick{$\ket{0}_\Delta$}  & \gate{\hat{R}(\pi)} &             & \ctrl{-2} & \meterD{\hat{p}} & \setwiretype{c} \wire[u][1]{c} \rstick{$s_2$}
    \end{quantikz}
}
\newcommand{\knillassym}{
    \begin{quantikz}[row sep={1cm, between origins}]
        \setwiretype{n}           &          & \lstick{$\ket{\psi}$} & \targ{} \qtikzdagger{} \setwiretype{q}    & \meterD{\hat{q}}   & \setwiretype{c} \rstick{$s_1$} \\
        \gategroup[2,steps=3,style={dashed,rounded corners,fill=black!0, inner xsep=2pt},background,label style={label position=below,anchor=north,yshift=-0.2cm}]{$\ket{\Phi^+}_{0, \Delta}$}
        \midstick{$\ket{0}_\Delta$} & \gate{F} & \ctrl{1}              & \ctrl{-1} & \meterD{\hat{p}}   & \setwiretype{c} \rstick{$s_2$} \\
        \midstick{$\ket{0}_\Delta$} &          & \targ{}               & \ghost{R} & \rstick{$\ket{\psi'}$}
    \end{quantikz}
}
% % Alternate circuit for special Knill circuit
\newcommand{\altknillassym}{
    \begin{quantikz}[row sep={1cm, between origins}]
        \setwiretype{n}           &          &          & \lstick{$\ket{\psi}$}          & \targ{} \setwiretype{q}    & \meterD{\hat{q}=s_1}   & \setwiretype{c} \\
        \lstick{$\ket{0}_\Delta$} & \gate{F} & \ctrl{1} & \midstick[2]{$\ket{\Phi^+}_L$} & \ctrl{-1} & \meterD{\hat{p}=s_2}   & \setwiretype{c} \\
        \lstick{$\ket{0}_\Delta$} &          & \targ{}  &                                & \ghost{R} & \rstick{$\ket{\psi'}$}
    \end{quantikz}
}
% % Alternate circuit for special Knill circuit
\newcommand{\altaltknillassym}{
    \begin{quantikz}[row sep={1cm, between origins}, color=black!40]
        \setwiretype{n}           &          & \lstick{$\ket{\psi}$} & \targ{} \setwiretype{q}    & \meterD{\hat{q}=s_1}   & \setwiretype{c} \\
        \lstick[label style=black]{$\ket{0}_\Delta$} \setwiretype{n} & \gate[style={draw=black}, label style=black]{F} \wire[l][1][style={black}]{q} \wire[r][2][style={black}]{q} & \ctrl[style={black}]{1}             & \ctrl{-1} \wire[r][1][]{q} & \meterD{\hat{p}=s_2}   & \setwiretype{c} \\
        \lstick[label style=black]{$\ket{0}_\Delta$} \setwiretype{n} &          \wire[l][1][style={black}]{q} \wire[r][2][style={black}]{q} & \targ[style={black}]{}              & \ghost{R} \wire[r][1][]{q} & \rstick{$\ket{\psi'}$}
    \end{quantikz}
}
\begin{figure*}
    \centering
    \resizebox{\textwidth}{!}{
    \begin{tikzpicture}
        % \node[anchor=east] at (0.5, 0) {\steane};
        \node[anchor=east] at (0.6, 0) {\includegraphics[]{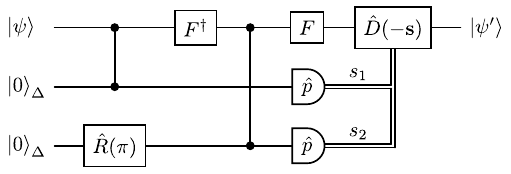}};
        \node[anchor=west] at (1, -0.375) {\knillassym};
        \node at (0, 0) {\huge\bfseries =};
        \node[anchor=center] at (-8.5, 2) {\Large (Steane)};
    \end{tikzpicture}
    }
    \caption{\textbf{(Left)}:Circuit implementation of Steane-type error correction. \textbf{(right)}: An equivalent Knill-type error correction scheme, where the Bell state is prepared using the standard qubit-based preparation circuit. As shown in \cref{sec:circuit analyses}, Steane-type error correction is a special case of Knill-type error correction.}
    \label{fig:Steane}
\end{figure*}

%Tzitrin et al. \cite{tzitrin_progress_2020} suggest that Knill-type error correction may outperform Steane-type error correction if a reliable source of high quality GKP states is available, as it reduces the number of operations directly applied to the noisy input state. However, as shown in \cref{fig:Steane}, Steane-type error correction is \emph{mathematically equivalent} to a special case of Knill-type error correction when the Bell state is prepared using the following circuit, referred to as the standard qubit preparation circuit:
% \begin{circuit}
% \begin{quantikz}[row sep={1cm, between origins}]
%     \lstick{$\ket{0}_\Delta$} & \gate{F} & \ctrl{1} & \rstick[2]{$\ket{\Phi^+}_{0,\Delta}$} \\
%     \lstick{$\ket{0}_\Delta$} &          & \targ{}  &
% \end{quantikz}
% \label{circ:standard bell}
% \end{circuit}
%This equivalence highlights that the two approaches are more similar than they might appear initially. Additionally, for performance analyses, Steane-type error correction can often be treated as a special case of the Knill-type error-correction.

\begin{figure*}
    \centering
    \includegraphics[width=0.4\linewidth]{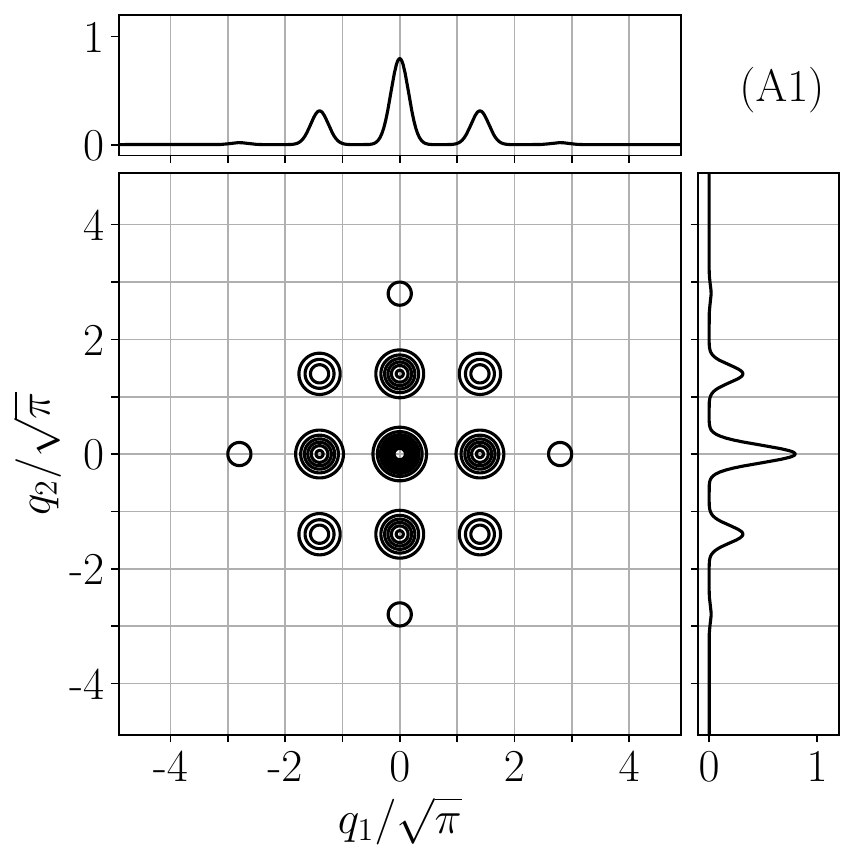}
    \includegraphics[width=0.4\linewidth]{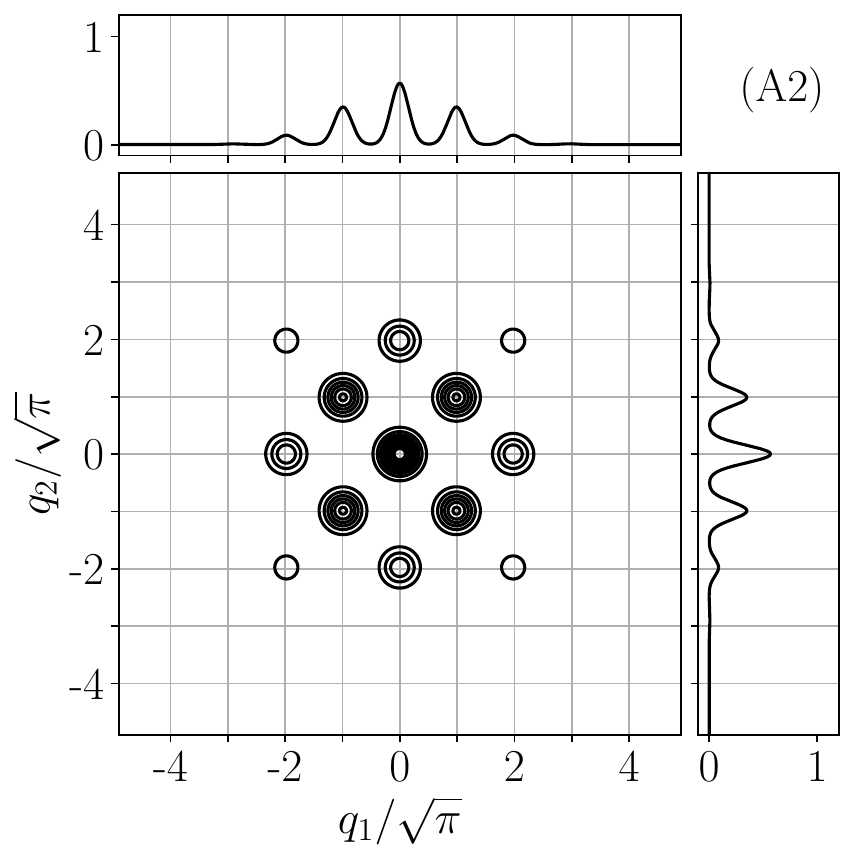}
    \includegraphics[width=0.4\linewidth]{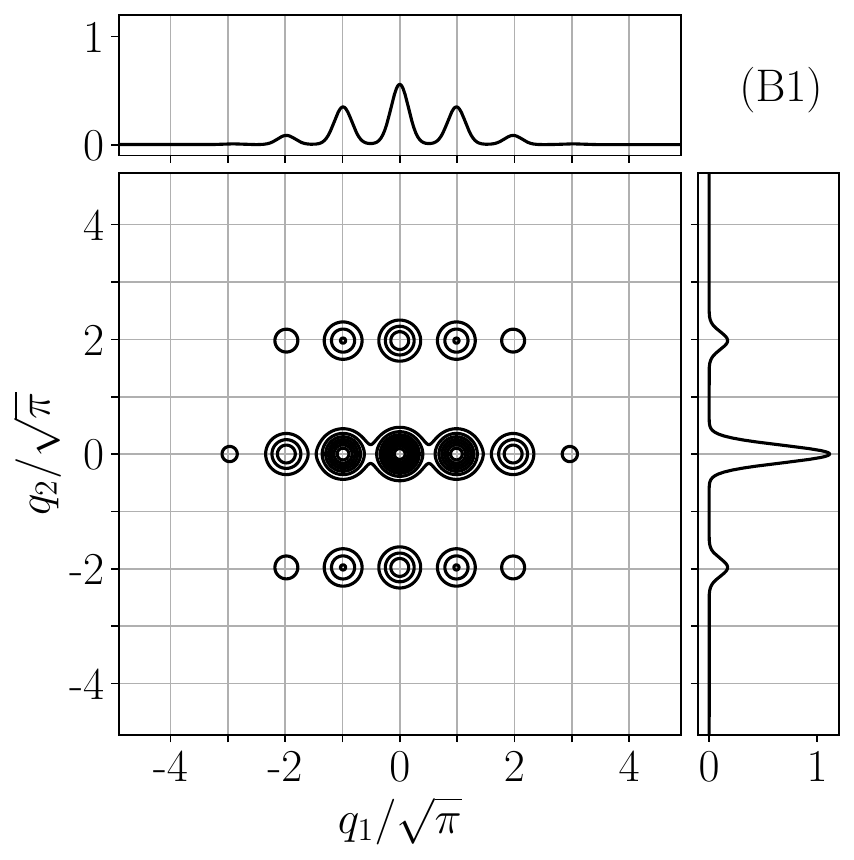}
    \includegraphics[width=0.4\linewidth]{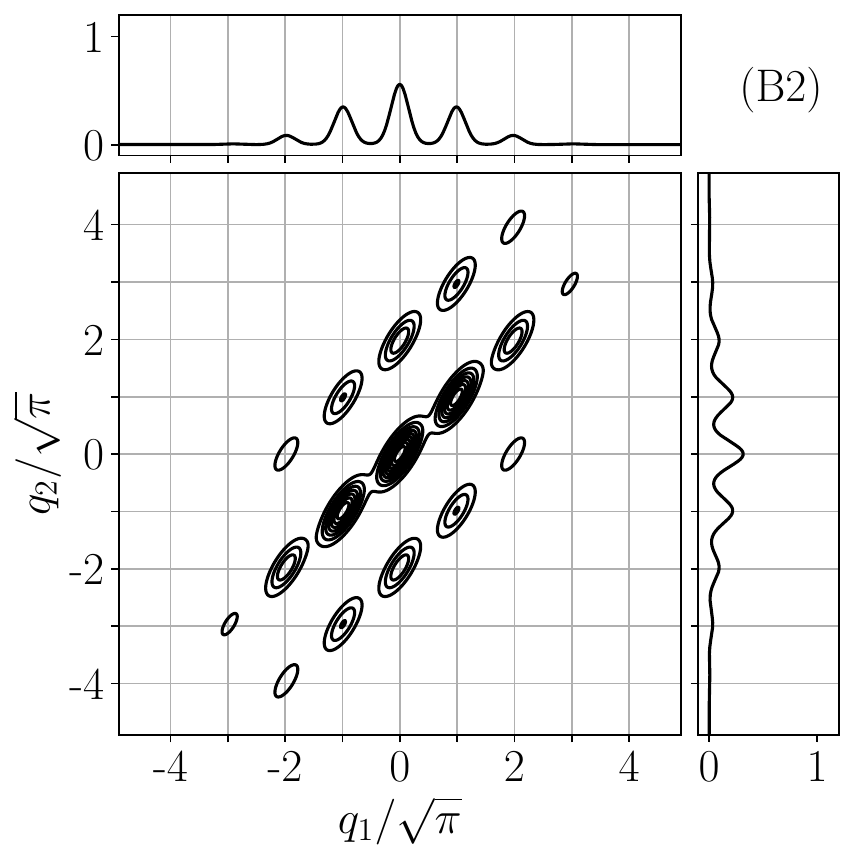}
    \caption{Two methods for preparing an approximation to the Bell state \( \ket{\Phi^+}_L \) represented by (A) and (B) corresponding to \cref{circ:qunaught bell} and \cref{circ:standard bell}, respectively. The approximation \( \ket{\argdot}_\Delta = e^{-\Delta \hat{n}} \ket{\argdot}_L \) is used in both cases. \textbf{(A1)}: The \( q \)-quadrature wavefunction \( \braket{q_1, q_2}{\qunaught, \qunaught}_\Delta \) of two qunaught states. \textbf{(A2)}: The result of applying a 50/50 beam splitter, transforming the state into \( \braket{q_1, q_2}{\Phi^+}_{\qunaught, \Delta} = \bra{q_1, q_2} \mathop{BS}_{12} \ket{\qunaught, \qunaught} \). Similarly, \textbf{(B1)} shows \( \braket{q_1, q_2}{+, 0}_\Delta \), and \textbf{(B2)} presents the result of applying a controlled-\(X\) gate, yielding \( \braket{q_1, q_2}{\Phi^+}_{0, \Delta} = \bra{q_1, q_2} \mathop{CX}_{12} \ket{+, 0}_\Delta \). Notably, the Bell state approximation obtained using qunaught states is symmetric between the two modes, with widths equal to those of the prepared input states. In contrast, the approximation obtained from the standard preparation circuit results in an asymmetrically widened second mode.}
    \label{fig:bell states}
\end{figure*}

\section{Notation and fundamental operations} \label{sec:notation}
This section introduces the essential notations and fundamental operations that form the foundation for the discussions and analyses in the subsequent sections. We define the hermitian quadrature operators $\hat{q}$ and $\hat{p}$, referred to as position and momentum respectively, with the canonical commutation relation $[\hat{q}, \hat{p}] = i$. Unless stated otherwise, kets without a subscript represent position eigenkets, while momentum eigenstates are denoted with a subscript $p$:
\begin{equation}
    \hat{q} \ket{s} = s \ket{s}
        ,\quad
    \hat{p} \ket{s}_p = s \ket{s}_p
        .
\end{equation}
Position and momentum eigenstates are related by the Fourier transform:
\begin{equation}
    \braket[p]{p}{q} = (2\pi)^{-\half} e^{-ipq}.
\end{equation}
The ideal infinite-energy GKP states are denoted as:
\begin{gather}
    \ket{0}_L = \sum_{n \in \mathbb{Z}} \ket{2n \sqrt{\pi}}
        ,\quad
    \ket{1}_L = \sum_{n \in \mathbb{Z}} \ket{(2n + 1) \sqrt{\pi}}
        .
\end{gather}
which represent the logical states in the computational basis and are eigenstates of the logical Pauli-Z operator (to be defined below). Their wavefunctions are shown in \cref{fig:gkp states}. Eigenstates in the complementary basis, corresponding to the Pauli-X operator, can be similarly defined. We also introduce the qunaught state, a squeezed logical state used in a special version of the Knill-type error correction scheme:
\begin{equation}
    \ket{\qunaught}_L = \sum_{n \in \mathbb{Z}} \ket{n \sqrt{2 \pi}}.
\end{equation}
This state in itself encodes no logical information due to the squeezed spacing; hence the name ``qunaught''. 

Realistic GKP states can be approximated by finite energy states in several ways and the equivalence between different approximations has been discussed in ref. \cite{matsuura_equivalence_2020}. Here, we denote a generic finite-energy approximation of a GKP logical state as $\ket{\argdot}_\Delta$, where $\Delta$ indicates finite width of the individual peaks. We defer specifying a particular choice of approximation until \cref{sec:error analysis}, as our results are independent of this choice until then.

We now introduce the operators central to this work. Phase-space displacement operators are defined as
\begin{gather}
    X(s) = e^{-i s \hat{p}}
        ,\quad
    Z(s) = e^{i s \hat{q}}
        , \text{ and} \nonumber \\
    D(\vb{s}) = X(s_1) Z(s_2)
\end{gather}         
where their actions on position eigenstates are given by 
\begin{gather}
\quad
    X(s) \ket{q} = \ket{q + s}
        ,\quad
    Z(s) \ket{q} = e^{isq} \ket{q}
        .
\end{gather}
In the GKP code space, the logical Pauli gates $X$ and $Z$ are implemented as $X(\sqrt{\pi})$ and $Z(\sqrt{\pi})$, respectively. The Hadamard gate corresponds to the Fourier operator $F$ which performs a $\pi/2$ rotation in phase space):
\begin{gather}
    F = R(\pi/2) = e^{i \frac{\pi}{4} (\hat{q}^2 + \hat{p}^2)}
        ,\quad
    R(\pi) = F^2.
\end{gather}
The Fourier operator acts on the eigenstates as
\begin{gather}
\quad
    F \ket{s} = \ket{s}_p
        \qq{and}
    F \ket{s}_p = \ket{-s}
        .
\end{gather}
Controlled phase-space displacement operators are defined as:
\begin{align}
    CX_{12} = e^{-i \hat{q}_1 \hat{p}_2},
        & \quad
    CZ_{12} = e^{i \hat{q}_1 \hat{q}_2}
\end{align}
with their action on two-mode position eigenstates given by:
\begin{align}
    CX_{12} \ket{q, q'} = \ket{q, q' + q},
        & \quad 
    CZ_{12} \ket{q, q'} = e^{i q q'} \ket{q, q'}
        .
\end{align}
These correspond to the logical controlled-$X$ and controlled-$Z$ gates, respectively. %Hence, for the above operators, we adopt the standard circuit notation of their corresponding logical gates. 
However, unlike their logical counterparts, the controlled displacement operators are not Hermitian $CX^\dagger \neq CX$, requiring explicit notation for adjoint operations in circuit diagrams. Finally, we introduce the balanced 50/50 beam splitter operator: 
\begin{gather}
    BS_{12} = e^{\frac{\pi}{4} (\hat{a}_1^\dagger \hat{a}_2 - \hat{a}_1 \hat{a}_2^\dagger)}
\end{gather}
where $\hat{a} = (\hat{q} + i\hat{p}) / \sqrt{2}$ is the annihilation operator of the harmonic oscillator.
Its action on position eigenstates is:
\begin{gather}
    BS_{12} \ket{q, q'} = \ket{\frac{1}{\sqrt{2}}(q + q'), \frac{1}{\sqrt{2}}(q - q')}.
\end{gather}
In circuit diagrams, a beam splitter is represented by  a downward arrow, while its adjoint, obtained by swapping inputs, is depicted with an upward arrow, as illustrated here
\begin{equation}
    BS_{12}
        =
    \begin{tikzpicture}[baseline=-1mm, scale=0.8]
        \draw[rounded corners=\bsrad, thick] (-0.8, 0.5) -- (-\bswidth, 0.5) -- (\bswidth, -0.5) -- (0.8, -0.5);
        \draw[rounded corners=\bsrad, thick] (-0.8, -0.5) -- (-\bswidth, -0.5) -- (\bswidth, 0.5) -- (0.8, 0.5);
        \filldraw[color=black, fill=white] (-0.4, -0.05) rectangle (0.4, 0.05);
        \draw[thin, arrows = {-Stealth[inset=1pt, length=3pt, angle'=45, round]}] (0,0.3) -- (0,-0.3);
    \end{tikzpicture}
        =
    \begin{tikzpicture}[baseline=-1mm, scale=0.8]
        \draw[rounded corners=\bsrad, thick] (-0.8, 0.5) -- (-\bswidth, 0.5) -- (\bswidth, -0.5) -- (0.8, -0.5);
        \draw[rounded corners=\bsrad, thick] (-0.8, -0.5) -- (-\bswidth, -0.5) -- (\bswidth, 0.5) -- (0.8, 0.5);
        \filldraw[color=black, fill=white] (-0.4, -0.05) rectangle (0.4, 0.05);
        \draw[thin, arrows = {-Stealth[inset=1pt, length=3pt, angle'=45, round]}] (0,-0.3) -- (0,0.3);
        \node at (0.15, 0.3) {$^\dagger$};
    \end{tikzpicture}
        =
    BS_{21}^\dagger
        .
\end{equation}

\section{Analysis of the action of GKP error correction circuits} \label{sec:circuit analyses}
In this section, we analyse the action of GKP error correction circuits, focussing on the equivalences between different implementations. Specifically, we show that the Knill approach, whether using a controlled displacement gate or a beam splitter as the entangling gate, is mathematically equivalent. 
%This equivalence is illustrated in \cref{fig:Knill circuits}, and therefore both implementations can collectively be referred to as the Knill approach. 
Furthermore, we show that the Steane approach to GKP error correction, shown on the left-hand side of \cref{fig:Steane}, is equivalent to a Knill-type error correction scheme when a specific Bell state is used as the ancilla (shown on the right-hand side of \cref{fig:Steane}). This result establishes that the Steane approach is a special case of the Knill approach, obviating the need for a separate analysis of its error correction performance.

We begin by analysing the error correction circuit depicted on the left-hand side of \cref{fig:Knill circuits}. Let its action be denoted by $K_1(\vb{s})$ where $\vb{s} = (s_1, s_2)^T$. The general result, derived in \cref{app:circuit calculations}, is:
\begin{gather}
    K_1(\vb{s}) 
        =
    (\bra{s_1} \tensor \bra[p]{s_2} \tensor \idty) CX^\dagger_{21} (\idty \tensor \ket{\Phi^+}_\Delta)
        \\ =
    \frac{1}{\sqrt{2\pi}} \int \odif{q', q''} \braket{q', q''}{\Phi^+}_\Delta \ket{q''}\bra{q'} D(-\vb{s})
        ,
\end{gather}
Similarly, we denote the action of the circuit on the right-hand side of \cref{fig:Knill circuits} as $K_2(\vb{s})$. As detailed in \cref{app:circuit calculations}, this yields 
\begin{gather}
    K_2(\vb{s}) 
        =
    \frac{1}{\sqrt{2}} (\bra{\frac{s_1}{\sqrt{2}}} \tensor \bra[p]{\frac{s_2}{\sqrt{2}}} \tensor \idty) BS_{21} (\idty \tensor \ket{\Phi^+}_\Delta)
        \\ \propto
    \frac{1}{\sqrt{2\pi}} \int \odif{q', q''} \braket{q', q''}{\Phi^+}_\Delta \ket{q''}\bra{q'} D(-\vb{s})
        .
\end{gather}
where the proportionality is up to an irrelevant global phase. The normalization factor $1/\sqrt{2}$ accounts for the projection onto the states $2^{-\frac{1}{4}} \ket{s_1/\sqrt{2}}$ and $2^{-\frac{1}{4}} \ket{s_2/\sqrt{2}}_p$, that are normalised with respect to $\vb{s} \in \mathbb{R}^2$.

\iffalse
{\color{red} If this is not clear then: \color{gray} The factor of $\frac{1}{\sqrt{2}}$ can alternatively be thought of as the proper transformation of the probability distribution for the measurement outcomes. To see this, notice that the probability of measuring an outcome $\vb{t} \in \mathbb{R}^2$ in the homodyne detectors is given by
\begin{equation}
    P(\vb{t}) = \bra{\psi} \hat{L}(\vb{t})^\dagger \hat{L}(\vb{t}) \ket{\psi} \odif[2]{\vb{t}}
\end{equation}
for some continuously parametrised measurement outcomes $\hat{L}$. But then for $\vb{t} = \frac{\vb{s}}{\sqrt{2}}$ with $\odif[2]{\vb{t}} = \frac{\odif[2]{\vb{s}}}{2}$, we find
\begin{equation}
    P(\vb{s})
        =
    \bra{\psi} \hat{L}(\vb{t})^\dagger \hat{L}(\vb{t}) \ket{\psi} \odif[2]{\vb{t}} 
        =
    \bra{\psi} \hat{L}(\vb{s}/\sqrt{2})^\dagger \hat{L}(\vb{s}/\sqrt{2}) \ket{\psi} \frac{\odif[2]{\vb{s}}}{2}
        =
    \bra{\psi} \left[ \frac{1}{\sqrt{2}} \hat{L}(\vb{s}/\sqrt{2}) \right]^\dagger \left[ \frac{1}{\sqrt{2}} \hat{L}(\vb{s}/\sqrt{2}) \right] \ket{\psi} \odif[2]{\vb{s}}
\end{equation}
from which we see that to describe the probability distribution of $\vb{s}$ additionally to evaluate the expression at $\vb{s}/\sqrt{2}$ one must account for the transformation of the differential by including an additional factor of $\frac{1}{\sqrt{2}}$.}
\fi

By direct comparison, the two circuits are found to have the same action:
\begin{equation}
    K \define K_1 \propto K_2,
\end{equation}
and in particular, we find that their common action takes the form
\begin{equation}
    K(\vb{s}) = \Pi_\Delta D(-\vb{s})
\end{equation}
where the projection operator is
\begin{equation}
    \Pi_\Delta \define \frac{1}{\sqrt{2\pi}} \int \odif{q', q''} \braket{q', q''}{\Phi^+}_\Delta \ket{q''}\bra{q'}.
    \label{eq:projection operator}
\end{equation}
The fact that GKP error correction factors into a component depending only on the measurement result and another dependent only on the ancillary states is a well established result \cite{baragiola_all-gaussian_2019}. Crucially, this result confirms the equivalence of the two implementations, and highlights that the projection operator takes the particular form given in \cref{eq:projection operator}.

Next, we analyse the similarity between the Steane approach and a specific Knill-type approach to GKP error correction, as depicted in \cref{fig:Steane}. Numerical simulations of both methods, conducted using Strawberry Fields \cite{killoran_strawberry_2019}, reveal that they produce identical outputs. This equivalence is visually demonstrated in \cref{fig:wigner}, where the Wigner functions of their outputs from both approaches are shown to overlap perfectly. Furthermore, quantitative comparison confirms this result, with the maximum absolute error between the two outputs being less than $10^{-8}$.

\begin{figure*}
    \centering
    \includegraphics[width=\textwidth]{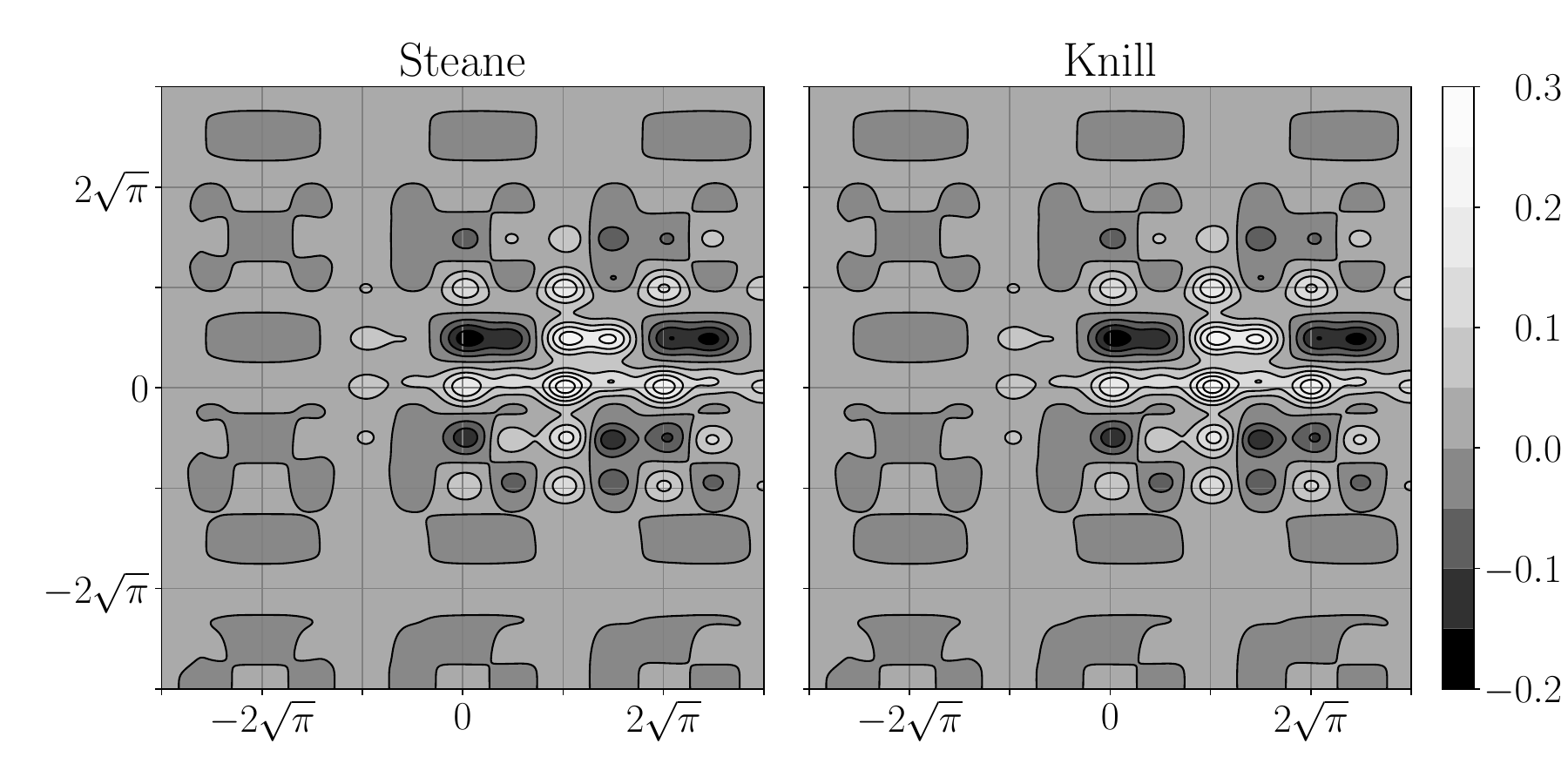}
    \caption{Wigner functions of the output states obtained from numerical simulations of the circuits in \cref{fig:Steane}, applied to the same coherent state with identical measurement outcomes. Simulations were performed using the Strawberry Fields Python package \cite{killoran_strawberry_2019}. Both visual inspection and numerical comparison (maximum error of \(10^{-8}\)) confirm that the two output states are identical.}
    \label{fig:wigner}
\end{figure*}

To prove this equivalence analytically, we consider the circuit on the left-hand side of \cref{fig:Steane}, whose action we denote by $S(\vb{s})$:
\begin{multline}
    S(\vb{s}) = (\hat{D}(-\vb{s}) \tensor \bra[p]{s_1} \tensor \bra[p]{s_2}) ~ \hat{F}_1 ~ CZ_{31} ~ \hat{F}_1^\dagger 
        \\
    \times ~ CZ_{21} ~ \hat{R}(\pi)_3 ~ (\idty \tensor \ket{0, 0}_\Delta)
\end{multline}
In \cref{app:circuit calculations}, we explicitly show that this circuit evaluates to
\begin{equation}
    S(\vb{s}) = \frac{1}{\sqrt{2\pi}} \int \odif{q, q''} \braket{q, q''}{\Phi^+}_{0, \Delta} \ket{q''}\bra{q} \hat{D}(-\vb{s})
\end{equation}
where $\ket{\Phi^+}_{0, \Delta}$ is a Bell state constructed using the following standard preparation circuit: 
\begin{circuit}
\begin{quantikz}[row sep={1cm, between origins}]
    \lstick{$\ket{0}_\Delta$} & \gate{F} & \ctrl{1} & \rstick[2]{$\ket{\Phi^+}_{0,\Delta}$} \\
    \lstick{$\ket{0}_\Delta$} &          & \targ{}  &
\end{quantikz}
\end{circuit}
Thus, the Steane approach is analytically shown to be equivalent to a specific case of the Knill approach, where the ancilla Bell state $\ket{\Phi^+}_{0, \Delta}$ is used: 
\begin{equation}
    \ket{\Phi^+}_\Delta = \ket{\Phi^+}_{0, \Delta}
        \implies
    S = K
        .
\end{equation}
This equivalence is exactly the relationship illustrated in \cref{fig:Steane}.

Finally, we note that the the projection operator $\Pi_{0, \Delta}$, implemented using the specific Bell state $\ket{\Phi^+}_{0, \Delta}$, can be expressed in terms of the Fourier transform $\fourier$ and convolution:
\begin{equation}
    \Pi_{0, \Delta}[\psi] = \frac{1}{\sqrt{2\pi}} \fourier^{-1}[\fourier[\psi] * \psi_{0, \Delta}] * \psi_{0, \Delta}.
\end{equation}
where $\psi(q) = \braket{q}{\psi}$ represents the wavefunction of the input state, and $\psi_{0, \Delta}(q) = \braket{q}{0}_\Delta$ corresponds to the wavefunction of the GKP ancilla state. This form directly reflects the structure of the left-hand circuit in \cref{fig:Steane} and suggests an efficient classical simulation strategy using the FFT algorithm.

\section{General analysis of the projection operator} \label{sec:analysis of pi}
We analyse the action of the GKP error correction projection operator, showing that if both the input state and the ancilla Bell state are well-approximated by linear superpositions of similar Gaussians, the resulting state after error correction is also a linear superposition of similar Gaussians. Furthermore, we derive the exact width of the spikes and their locations in phase space of the corrected state.

To analyse the projection operator $\Pi_\Delta$ (defined in \cref{eq:projection operator}), we work in the Schrödinger picture:
\begin{equation}
    \bra{q} \Pi_\Delta \ket{\psi} = \frac{1}{\sqrt{2\pi}} \int \odif{q'} \braket{q', q}{\Phi^+}_\Delta \braket{q'}{\psi}.
    \label{projected_wavefuntion}
\end{equation}
We assume the states are well-approximated by linear superpositions of similar Gaussian spikes which share the same covariance matrix. Under this assumption, we show that the error-corrected states remains a superposition of Gaussian spikes.

We represent a (non-normalised) Gaussian function as
\begin{equation}
    G(x; \mu, Q) \define e^{-\half (x - \mu)^T Q (x - \mu)}.
\end{equation}
Notice we use a formulation based on the precision matrix $Q$ rather than the covariance matrix $Q^{-1}$.
%(not to be confused with the covariance matrix between $q$ and $p$ of a Gaussian quantum state). 
We do not require the mean $\mu$ nor precision $Q$ to be real, allowing for phases that are linear or quadratic in the quadrature. The only constraint is that the covariance matrix must be symmetric positive semidefinite. Furthermore, the positions of the Gaussian spikes, represented by the means $\mu$, can belong to any countably discrete set. This set may for example correspond to a perfect GKP lattice, a slightly displaced lattice, or even an irregular lattice.

The input state $\ket{\psi}$ and the ancilla Bell state $\ket{\Phi^+}_\Delta$ are then approximated as
\begin{align}
    \braket{q}{\psi} \simeq \sum_{\mu_{in}} c_{\mu_{in}} G(q; \mu_{in}, Q_{in})
        &\qq{with}
    q \in \mathbb{R}
        \\
    \braket{\vb{q}}{\Phi^+}_\Delta \simeq \sum_{\vb{\mu}} c'_{\vb{\mu}} G(\vb{q}; \vb{\mu}, Q)
        &\qq{with}
    \vb{q} \in \mathbb{R}^2
        .
\end{align}
where $1/Q_{in}=\sigma_{in}^2$ represents the width of the spikes of the input state. We assume no relation between the spikes of the input $Q_{in} \in \mathbb{R}$ and those of the prepared ancilla Bell state $Q \in \mathbb{R}^{2\times2}$. Substituting these approximations into the expression \cref{projected_wavefuntion}, the wave function after error correction takes the form
\begin{multline}
    \bra{q} \Pi_\Delta \ket{\psi}
        =
    \frac{1}{\sqrt{2\pi}} \sum_{\vb{\mu}, \mu_{in}} c'_{\vb{\mu}} c_{\mu_{in}} 
        , \\ \times
    \int \odif{q'} G((q', q)^T; \vb{\mu}, Q) G(q'; \mu_{in}, Q_{in})
        . \label{eq:special integral}
\end{multline}
where the integral determines the interaction between the input state and the ancilla Bell state. The explicit computation of this integral is detailed in \cref{app:integral} \cref{lemma:special integral}, resulting in:
\begin{multline}
    \frac{1}{\sqrt{2\pi}} \int \odif{q'} G((q', q)^T; \vb{\mu}, Q) G(q'; \mu_{in}, Q_{in})
        \\ = 
    \frac{1}{N} G(\mu_1-\mu_{in}; 0, \rho^{-2}) ~ G(q; \mu_{out}^{\vb{\mu}, \mu_{in}}, \sigma_{out}^{-2})
        ,
\end{multline}
where $N = \sqrt{Q_{11} + Q_{in}}$ and $\rho^2 = \frac{1}{Q_{in}} + \frac{Q_{22}}{\det{Q}}$, and the key parameters are
\begin{equation}
    \mu_{out}^{\vb{\mu}, \mu_{in}} = \mu_2 + \frac{Q_{in} Q_{12}}{\det{Q} + Q_{in} Q_{22}} (\mu_1 - \mu_{in}),
        \label{eq:mean}
\end{equation}
and
\begin{equation}
    \sigma_{out}^2 = \frac{Q_{11} + Q_{in}}{\det{Q} + Q_{in} Q_{22}}.
    \label{eq:width}
\end{equation}
Thus, the resulting wave function is a superposition of Gaussian spikes with a common width $\sigma_{out}^2$:
\begin{equation}
    \bra{q} \Pi_\Delta \ket{\psi} = \sum_{\vb{\mu}, \mu_{in}} c''_{\vb{\mu}, \mu_{in}} G(q; \mu_{out}^{\vb{\mu}, \mu_{in}}, \sigma_{out}^{-2}).
        \label{eq:projected wave function}
\end{equation}
This shows that the post-correction error is fully characterised by the two parameters $\mu_{out}^{\vb{\mu}, \mu_{in}}$ and $\sigma_{out}^2$.

While the above analysis applies to the $q$ quadrature wave function, a similar analysis can be performed for the $p$ quadrature. In this case
% \begin{gather}
%     \bra[p]{p} \Pi_\Delta \ket{\psi}
%         =
%     \frac{1}{\sqrt{2\pi}} \int \odif{q', q''} \braket{q', q''}{\Phi^+}_\Delta \braket[p]{p}{q''} \braket{q'}{\psi}
%         \\ =
%     \frac{1}{\sqrt{2\pi}} \int \odif{q', q''} \braket{q', q''}{\Phi^+}_\Delta \braket[p]{p}{q''} \int \odif{p'} \braket{q'}{p'}_p \braket[p]{p'}{\psi}
%         \\ =
%     \frac{1}{\sqrt{2\pi}} \int \odif{q', q'', p'} \braket[p]{-p', p}{q', q''} \braket{q', q''}{\Phi^+}_\Delta \braket[p]{p'}{\psi}
%         \\ =
%     \frac{1}{\sqrt{2\pi}} \int \odif{p'} \braket[p]{-p', p}{\Phi^+}_\Delta \braket[p]{p'}{\psi}
% \end{gather}
\begin{equation}
    \bra[p]{p} \Pi_\Delta \ket{\psi}
        =
    \frac{1}{\sqrt{2\pi}} \int \odif{p'} \braket[p]{p', p}{\Phi^+}_\Delta \braket[p]{-p'}{\psi}
        ,
\end{equation}
This expression mirrors the q-quadrature analysis with the substitution 
%which, up to the sign change, is exactly the same form as has been studied above, but where everything is expressed in the $p$ quadrature. It is then evident that the sign change is equivalent to the transformation 
$\mu_{in} \mapsto -\mu_{in}$ in the result in \cref{eq:projected wave function}. This transformation does not affect the overall conclusion, allowing us to use the same results interchangeably for both quadratures in further analysis.

\section{Error analysis} \label{sec:error analysis}
\iffalse
In this work, we use the width $\Delta$ of the individual spikes in the wave function as a measure of quality, rather than the GKP squeezing $s$. However, following \cite{noh_fault-tolerant_2020, menicucci_fault-tolerant_2014}, these are directly related by:
\begin{equation}
    s = -10 \log_{10} \left( \frac{\Delta^2/2}{1/2} \right) \text{dB} = -10 \log_{10}(\Delta^2) ~ \text{dB},
\end{equation}
where $\Delta^2/2$ and $1/2$ represent the quadrature variances of the spike and the vacuum, respectively. (The factor of $1/2$ arises because probabilities are the square of the wave function.) For this reason, we will use the two terms interchangeably, noting that a decrease in spike width corresponds to an increase in squeezing.

As shown earlier, there is no fundamental difference between the Knill and Steane approaches; the distinction lies only in the preparation in the Bell state $\ket{\Phi^+}_\Delta$. For a Bell state consisting of Gaussian spikes described by the covariance matrix $Q^{-1}$, the output after error-correction will have spikes with widths:
\begin{equation}
    \sigma_{out}^2 = \frac{1}{Q_{22}} ~ \frac{Q_{11} + Q_{in}}{\frac{\det{Q}}{Q_{22}} + Q_{in}},
        \label{eq:width}
\end{equation}
where $Q_{in}^{-1} = \sigma_{in}^2$ represents the width of the Gaussian spikes of the input. Therefore, determining the output quality reduces to analysing the covariance matrix $Q^{-1}$ for the Bell states prepared by the two preparation schemes in \cref{circ:standard bell,circ:qunaught bell}.
\fi

Having derived the post-error-corrected state in \cref{eq:projected wave function} with the GKP squeezing width defined in \cref{eq:width}, we now explicitly quantify the performance of the Knill-type error correction scheme using two types of entangled Bell states as ancilla: those produced through the qunaught-based circuit and those generated via the traditional qubit-based circuit.

We consider an input state that has undergone errors, mathematically described by the photon number dampening operator, $e^{-\Delta^2\hat{N}}$, resulting in the state\cite{menicucci_fault-tolerant_2014}:
\begin{gather}
    \ket{\psi}_\Delta = e^{-\Delta^2 \hat{N}} \ket{\psi}_L
\end{gather}
where $\Delta \in \mathbb{R}$ and $\hat{N} = \sum_i \hat{n}_i$ is the total photon number operator. This noise model is symmetric, producing approximate GKP states with equal squeezing in both the $q$ and $p$ quadratures. Under this approximation, the width of the Gaussian spikes is well-approximated by $Q \simeq \Delta^{-2} \idty$ as shown in  \cite{noh_fault-tolerant_2020}. This symmetry simplifies analysis and makes the model particularly useful for studying error correction performance. While the photon number dampening approximation is used here, the result in \cref{eq:width} is general and can be readily applied to other noise models. Considering the equivalence between the three main GKP noise models established in \cite{matsuura_equivalence_2020}, the conclusions drawn here are broadly applicable across all these cases.

\subsection{Bell state from qunaught states} \label{sec:error qunaught}
Consider the preparation scheme where two qunaught states are entangled using a beam splitter: 
\begin{circuit}
\begin{quantikz}[row sep={1cm, between origins}]
    \lstick{$\ket{\qunaught}_\Delta$} & \bsdown{} & \rstick[2]{$\ket{\Phi^+}_{\qunaught, \Delta}$} \\
    \lstick{$\ket{\qunaught}_\Delta$} &           &
\end{quantikz}
\label{circ:qunaught bell}
\end{circuit}
The beam splitter is a passive linear transformation that preserves the total number of photons and therefore commutes with the number operator $\hat{N}$. Therefore, the circuit can be rewritten as
\begin{align}
    \ket{\Phi^+}_{\qunaught, \Delta}
        &=
    BS_{12} \ket{\qunaught, \qunaught}_\Delta
        =
    BS_{12} ~ e^{-\Delta^2 \hat{N}} \ket{\qunaught, \qunaught}_L
        \\ &=
    e^{-\Delta^2 \hat{N}} BS_{12} \ket{\qunaught, \qunaught}_L
        =
    e^{-\Delta^2 \hat{N}} \ket{\Phi^+}_L
        .
\end{align}
The result shows that the approximate Bell state prepared by the circuit is equivalent to applying the symmetric approximation directly to the ideal logical Bell state. Consequently, the Bell state comprises Gaussian spikes described by $Q = \Delta^{-2} \idty$, as confirmed by simulations in \cref{fig:bell states} (A1) and (A2).

Since $Q$ is diagonal, we can immediately see from \cref{eq:width} that
\begin{equation}
    \sigma_{out}^2 = \Delta^2.
\end{equation}
This means the output width matches that of the prepared ancilla Bell state, regardless of the input width. Since the error model and the results are symmetric between the $q$ and $p$ quadratures, the same applies to both quadratures.

Finally, since $Q$ is diagonal, $Q_{12}=0$, and thus the corrected Gaussian spikes are located exactly at the locations defined by the prepared ancilla Bell state. In other words, the displacement errors post error correction are independent of any previous displacement errors. If the Bell state aligns with the GKP lattice, then the corrected state will be free of displacement errors.

In summary, the qunaught preparation scheme for preparing the approximate Bell states for the Knill approach produces error-corrected states with symmetric $q$ and $p$ quadratures, preserving the GKP squeezing of the qunaught states and avoiding displacements.

\subsection{Bell state from standard qubit circuit} \label{sec:error standard}
Now consider the standard qubit circuit for preparing the Bell state:
\begin{circuit}
\begin{quantikz}[row sep={1cm, between origins}]
    \lstick{$\ket{0}_\Delta$} & \gate{F} & \ctrl{1} & \rstick[2]{$\ket{\Phi^+}_{0, \Delta}$} \\
    \lstick{$\ket{0}_\Delta$} &          & \targ{}  &
\end{quantikz}
\label{circ:standard bell}
\end{circuit}
Following similar reasoning, the Bell state can be expressed as
\begin{align}
    \ket{\Phi^+}_{0, \Delta}
        &=
    CX_{12} \hat{F}_1 \ket{0, 0}_\Delta
        \\ &=
    CX_{12} \hat{F}_1 ~ e^{-\Delta^2 \hat{N}} \ket{0, 0}_L
        \\ &=
    CX_{12} ~ e^{-\Delta^2 \hat{N}} \ket{+, 0}_L
        .
\end{align}
Unlike the beam splitter, the controlled-X ($CX$)-gate does not commute with the number operator $\hat{N}$. Instead, its action transforms the quadratures according to the symplectic matrix: 
%we have $CX_{12} \ket{\vb{q}} = \ket{q_1, q_1+q_2} = \ket{S \vb{q}}$, where $S$ is (the $q$-quadrature part of) the symplectic form of the $CX$-gate. 
\begin{equation}
    S = \mat{1&0\\1&1}
        \qq{and}
    S^{-1} = \mat{1&0\\-1&1}
        .
\end{equation}
This transformation changes the wave function as $\bra{\vb{q}} CX_{12} \ket{\psi} = \braket{S^{-1} \vb{q}}{\psi}$, and modifies the the mean and precision matrix of Gaussian states as $G(S^{-1} \vb{q}; \vb{\mu}, Q) = G(\vb{q}; S \vb{\mu}, S^{-T} Q S^{-1})$. 

The approximate logical state $e^{-\Delta^2 \hat{N}} \ket{+, 0}_L$ has spikes with inverse covariance matrix $\Delta^{-2} \idty$, and so, the approximate Bell state $\ket{\Phi^+}_{0, \Delta}$ is obtained from this by transformation according to $S$:
\begin{equation}
    Q = S^{-T} (\Delta^{-2} \idty) S^{-1}
        =
    \Delta^{-2} \mat{2&-1\\-1&1}
        .
\end{equation}
The diagonal entries $Q_{11}$ and $Q_{22}$ differ, revealing an asymmetry of the prepared Bell state, as illustrated in \cref{fig:bell states}.  

Using \cref{eq:width}, the output spike width in the $q$-quadrature is
\begin{equation}
    \sigma_{out, q}^2 
        =
    \Delta^2 ~ \frac{1 + 2 (\frac{\sigma_{in, q}}{\Delta})^2}{1 + (\frac{\sigma_{in, q}}{\Delta})^2}
        \define
    \Delta^2 \cdot e^{2 r_q}
        .
        \label{q-width}
\end{equation}
This result shows that the output width depends on the input width relative to the prepared ancilla states. Specifically, $\sigma_{out, q}^2 > \Delta^2$, stating that the output width is always larger than that of the prepared ancilla $\ket{0}_\Delta$-states. This indicates that the $q$-quadrature GKP squeezing noise of the signal is not corrected to a width of $\Delta^2$ as in the Knill type error correction scheme but to an amplified value. For $\sigma_{in}>>\Delta$ we find that $\sigma_{out, q}^2=2\Delta^2$.  

Repeating the analysis for the $p$-quadrature, where the $CX_{12}$ gate transforms momentum eigenstates as $\bra[p]{\vb{p}} CX_{12} \ket{\psi} = \braket[p]{S^T \vb{p}}{\psi}$, the precision matrix becomes:
\begin{equation}
    Q = 
    \Delta^{-2} \mat{1&1\\1&2}
        .
\end{equation}
We notice that compared to the previous case the role of the diagonal entries $Q_{11}$ and $Q_{22}$ has been swapped. Since the off-diagonal entries contribute to the output width only through the determinant, their sign changes cancel, and the above covariance is equivalent to simply swapping the two modes of the Bell state. Using \cref{eq:width}, the output spike width in the $p$-quadrature is  
\begin{equation}
    \sigma_{out, p}^2
        =
    \Delta^2 ~ \frac{1 + (\frac{\sigma_{in, p}}{\Delta})^2}{2 + (\frac{\sigma_{in, p}}{\Delta})^2}
    \define
    \Delta^2 \cdot e^{2 r_p}
        .
        \label{p-width}
\end{equation}
The result still depends on the GKP squeezing of the input state compared to that of the prepared ancilla states, this time, however, we have $\sigma_{out, p}^2 < \Delta^2$, and so, the resulting width is always smaller than that of the prepared ancilla $\ket{0}_\Delta$-states. This suggests a noise correction of the $p$-quadrature beyond the GKP squeezing variance of the ancilla state.

We have now found the spike width in both the $q$ and $p$ quadratures post error correction, given in \cref{q-width} and \cref{p-width}. Firstly, we have already observed that the error correction is indeed asymmetric between the two quadratures. Secondly, we notice that the above equations are very similar to the relations for the application of ideal squeezing: $\sigma_q^2 \mapsto e^{2r} \sigma_q^2$ and $\sigma_p^2 \mapsto e^{-2r}\sigma_p^2$. An ideal squeezing operation does not change the total amount of GKP squeezing, but only redistributes what is already present between the two quadratures. This follows from the fact that squeezing by an amount $r$ preserves the Heisenberg product $\sigma_q^2 ~ \sigma_p^2 \mapsto e^{2r} \sigma_q^2 ~ e^{-2r}\sigma_p^2 = \sigma_q^2 ~ \sigma_p^2$. If one quadrature is squeezed the other is simultaneously anti-squeezed by the same amount.

There is no reason for us to expect GKP error correction to preserve total squeezing between $q$ and $p$, and indeed, it turns out not to be the case. To see this, we take inspiration from ideal squeezing, and consider the Heisenberg product for the output state. To ease notation, we also write $x = \frac{\sigma_{in, q}}{\Delta}$ and $y = \frac{\sigma_{in, p}}{\Delta}$.
\begin{equation}
    e^{2 r_q} e^{2 r_p} 
        =
    \frac{1 + 2 x^2}{1 + x^2} \cdot \frac{1 + y^2}{2 + y^2}
        =
    \frac{x^2 y^2 + z^2}{1 + z^2}
\end{equation}
where $z^2 = 1 + 2 x^2 + y^2 + x^2 y^2$. The right-most fraction is easily seen to be $\geq 1$, corresponding to a gain of width, or equivalently a net loss in total GKP squeezing exactly when $x^2 y^2 \geq 1$. This is equivalent to $\sigma_{in, q}^2 \sigma_{in, p}^2 \geq \Delta^2 \Delta^2$, that is, when the combined quadrature width of the input is greater than that of the prepared states, which will certainly be the case for most real cases.

As in the previous case, we also consider what happens to the locations of the Gaussian spikes of the error-corrected state. As we have shown, the inverse covariance matrix of the spikes in the prepared Bell state is \emph{not} diagonal. By \cref{eq:mean}, we immediately see that this implies that the locations of the corrected spikes will depend on the locations in the noisy input state. In other words, we expect to see (small) displacements away from the GKP lattice of the peaks in the error corrected state; displacement errors that are correlated with previous errors.

There is one final subtlety that must be handled. In the above, we have studied the case where in the Knill approach, the input is entangled with the mode that was initially the control for the controlled displacement operation in the preparation circuit. The preparation circuit is not symmetric between the two modes, which means that we should also study the case where we swap the two modes. However, as we have seen, the difference between considering the $q$ and $p$ quadratures of the output corresponds to swapping the modes of the Bell state. Thus, this final case is actually covered by the above simply by swapping the results for the two quadratures.

To conclude, we have shown that the Knill approach using the standard qubit circuit for preparing an approximate Bell state (which is the same as using the Steane approach), produce error corrected states that are asymmetric between $q$ and $p$. Furthermore, in most cases, the total squeezing of the error corrected state is less than that of the prepared approximate logical states. That is, some of the work done preparing high quality states is lost. Additionally, the corrected states will experience displacement errors off of those defined by the prepared Bell state.
\section{Conclusion}
In conclusion, our analysis definitively shows that the Steane approach to GKP error correction is equivalent to the Knill approach when using the standard Bell state preparation circuit (\cref{sec:circuit analyses}). However, we find that using the Bell state prepared using qunaught states results in error correction that yields higher quality logical states for the same amount of initial GKP squeezing, making it the preferable option (\cref{sec:error analysis}). Additionally, we have shown that the choice of entangling gate in the Knill approach is somewhat flexible: both the beam splitter and the controlled displacement operator result in equivalent physical error correction (\cref{sec:circuit analyses}). By specifically using a beam splitter in conjunction with Bell state preparation via qunaught states, GKP error correction can be achieved using only linear optics, thus simplifying the implementation. In this way, the qunaught-based approach provides both better error correction and ease of implementation without trade-off.

% From the analysis in Section 6.1, we have reason to believe that the Bell state prepared via qunaught states may indeed be optimal in the following sense. If the Bell state is prepared from a product state using only Gaussian operations, the Knill approach cannot produce error-corrected states with total GKP squeezing greater than that of the individual states of the initial product state. However, a formal proof of this remains elusive and is left for future work.

\section{Acknowledgements}
This work was supported by Innovation Fund Denmark under grant no. 1063-00046B - “PhotoQ Photonic Quantum Computing”. ULA also acknowledge support from the Danish National Research Foundation, Center for Macroscopic Quantum States (bigQ, DNRF0142), EU project CLUSTEC (grant agreement no. 101080173), EU ERC project ClusterQ (grant agreement no. 101055224) and the NNF project CBQS (NNF 24SA0088433).

Also, we thank in particular our colleagues Anton Alnor Christensen (Aarhus University), Sebastian Yde Madsen (Kvantify ApS), Josefine Bjørndal Robl (Aarhus University), and Caterina Vigliar (Technical University of Denmark) for their helpful discussions and detailed feedback on the manuscript.

\section{Additional resources}
The code used to make \cref{fig:gkp states,fig:bell states,fig:wigner} as well as code used for numerical sanity-checks of analytical results is publicly available at \cite{Marqversen_quantum_computations}.

\emergencystretch=1em
\printbibliography

\end{multicols}

\appendix
\appendixpage
\numberwithin{equation}{chapter}
\newcommand{\stepparentcounter}{\stepcounter{parentequation} \gdef\theparentequation{\Alph{chapter}.\arabic{parentequation}}}

\chapter{Explicit calculations for GKP error correction circuits} \label{app:circuit calculations}
We explicitly calculate the action of the different types of GKP error correction $K_1$, $K_2$, and $S$, as introduced in \cref{sec:circuit analyses} of the main text. First consider $K_1$ defined as the circuit in the left-hand side of \cref{fig:Knill circuits} of the main text.
\begin{subequations}
\begin{align}
    K_1(\vb{s}) 
        &= 
    (\bra{s_1} \tensor \bra[p]{s_2} \tensor \idty) ~ CX^\dagger_{21} ~ (\idty \tensor \ket{\Phi^+}_\Delta)
        \\ &=
    (\bra{s_1} \tensor \bra[p]{s_2} \tensor \idty) CX^\dagger_{21} \int \odif{q, q', q''} \ket{q, q', q''} \braket{q', q''}{\Phi^+}_\Delta \bra{q}
        \\ &=
    (\bra{s_1} \tensor \bra[p]{s_2} \tensor \idty) \int \odif{q, q', q''} \ket{q-q', q', q''} \braket{q', q''}{\Phi^+}_\Delta \bra{q}
        \\ &= 
    \frac{1}{\sqrt{2\pi}} \int \odif{q, q', q''} \delta(q-q' - s_1) e^{-i s_2 q'} \braket{q', q''}{\Phi^+}_\Delta \ket{q''}\bra{q}
        \\ &=
    \frac{1}{\sqrt{2\pi}} \int \odif{q', q''} \braket{q', q''}{\Phi^+}_\Delta \ket{q''}\bra{q' + s_1} e^{-i s_2 q'}
        \\ &=
    \frac{1}{\sqrt{2\pi}} \int \odif{q', q''} \braket{q', q''}{\Phi^+}_\Delta \ket{q''}\bra{q'} D(-\vb{s})
        ,
\intertext{Now consider $K_2$ defined as the circuit in the right-hand side of \cref{fig:Knill circuits} of the main text. In the definition of $K_2$, we include an additional factor of $1/\sqrt{2}$ which comes from projecting onto the states $2^{-\frac{1}{4}} \ket{s_1/\sqrt{2}}$ and $2^{-\frac{1}{4}} \ket{s_2/\sqrt{2}}_p$ that are normalised with respect to $\vb{s} \in \mathbb{R}^2$.} \stepparentcounter
    K_2(\vb{s}) 
        &=
    \frac{1}{\sqrt{2}} (\bra{\frac{s_1}{\sqrt{2}}} \tensor \bra[p]{\frac{s_2}{\sqrt{2}}} \tensor \idty) ~ BS_{21} ~ (\idty \tensor \ket{\Phi^+}_\Delta)
        \\ &=
    \frac{1}{\sqrt{2}} (\bra{\frac{s_1}{\sqrt{2}}} \tensor \bra[p]{\frac{s_2}{\sqrt{2}}} \tensor \idty) BS_{21} \int \odif{q, q', q''} \ket{q, q', q''} \braket{q', q''}{\Phi^+}_\Delta \bra{q}
        \\ &=
    \frac{1}{\sqrt{2}} (\bra{\frac{s_1}{\sqrt{2}}} \tensor \bra[p]{\frac{s_2}{\sqrt{2}}} \tensor \idty) \int \odif{q, q', q''} \ket{\frac{1}{\sqrt{2}} (q-q'), \frac{1}{\sqrt{2}} (q+q'), q''} \braket{q', q''}{\Phi^+}_\Delta \bra{q}
        \\ &=
    \frac{1}{\sqrt{2\pi}} \int \odif{q, q', q''} \frac{1}{\sqrt{2}} \delta(\frac{1}{\sqrt{2}}(q-q' - s_1)) e^{-i \half s_2 (q+q')} \braket{q', q''}{\Phi^+}_\Delta \ket{q''}\bra{q}
        \\ &=
    \frac{1}{\sqrt{2\pi}} \int \odif{q', q''} \braket{q', q''}{\Phi^+}_\Delta \ket{q''}\bra{q' + s_1} e^{-i s_2 q'} e^{-i \half s_1 s_2}
        \\ &=
    \frac{1}{\sqrt{2\pi}} \int \odif{q', q''} \braket{q', q''}{\Phi^+}_\Delta \ket{q''}\bra{q'} D(-\vb{s}) e^{-i \half s_1 s_2}
        .
\intertext{Finally consider $S$ defined as the circuit in the left hand side of \cref{fig:Steane} of the main text. For this computation, we utilise the following identity: $F_1 CZ_{31} F_1^\dagger = CX_{31}^\dagger$.} \stepparentcounter
    S(\vb{s}) &= (\hat{D}(-\vb{s}) \tensor \bra[p]{s_1} \tensor \bra[p]{s_2}) ~ \hat{F}_1 ~ CZ_{31} ~ \hat{F}_1^\dagger ~ CZ_{21} ~ \hat{R}(\pi)_3 ~ (\idty \tensor \ket{0, 0}_\Delta)
        \\ &=
    (\hat{D}(-\vb{s}) \tensor \bra[p]{s_1} \tensor \bra[p]{s_2}) ~ CX^\dagger_{31} ~ CZ_{21} ~ \hat{R}(\pi)_3 \int \odif{q, q', q''} \ket{q, q', q''} \braket{q', q''}{0, 0}_\Delta \bra{q}
        \\ &=
    (\hat{D}(-\vb{s}) \tensor \bra[p]{s_1} \tensor \bra[p]{s_2}) ~ CX^\dagger_{31} \int \odif{q, q', q''} e^{iqq'} \ket{q, q', -q''} \braket{q', q''}{0, 0}_\Delta \bra{q}
        \\ &=
    (\hat{D}(-\vb{s}) \tensor \bra[p]{s_1} \tensor \bra[p]{s_2}) \int \odif{q, q', q''} e^{iqq'} \ket{q + q'', q', -q''} \braket{q', q''}{0, 0}_\Delta \bra{q}
        \\ &=
    \hat{D}(-\vb{s}) \int \odif{q, q', q''} e^{iqq'} \frac{1}{2\pi} e^{-i s_1 q'} e^{i s_2 q''} \ket{q + q''} \braket{q', q''}{0, 0}_\Delta \bra{q}
        \\ &=
    \frac{1}{2\pi} \int \odif{q, q', q''} e^{i (q-s_1) q'} e^{-i s_2 q} \ket{q - s_1 + q''} \braket{q', q''}{0, 0}_\Delta \bra{q}
        \\ &=
    \frac{1}{2\pi} \int \odif{q, q', q''} e^{i q q'} \braket{q', q''}{0, 0}_\Delta \ket{q + q''} \bra{q + s_1} e^{-i s_2 q}
        \\ &=
    \frac{1}{\sqrt{2\pi}} \int \odif{q, q''} \left[ \frac{1}{\sqrt{2\pi}} \int \odif{q'} e^{i q q'} \braket{q', q'' - q}{0, 0}_\Delta \right] \ket{q''}\bra{q} \hat{D}(-\vb{s})
        .
\end{align}
\end{subequations}
Comparing this result to those found from analysing the Knill approach circuits, we are motivated to further consider the term in brackets:
\begin{align}
    \frac{1}{\sqrt{2\pi}} \int \odif{q'} e^{i q q'} \braket{q', q'' - q}{0, 0}_\Delta &
        =
    \bra{q, q'' - q} \hat{F}_1 \ket{0, 0}_\Delta
        =
    \bra{q, q''} CX_{21} \hat{F}_1 \ket{0, 0}_\Delta
        =
    \braket{q, q''}{\Phi^+}_{0, \Delta}
        ,
\end{align}
where $\ket{\Phi^+}_{0, \Delta}$ is exactly the Bell state defined in \cref{circ:standard bell} of the main text. Thus
\begin{equation}
    S(\vb{s}) = \frac{1}{\sqrt{2\pi}} \int \odif{q, q''} \braket{q, q''}{\Phi^+}_{0, \Delta} \ket{q''}\bra{q} \hat{D}(-\vb{s}).
\end{equation}

\chapter{Special integral of product of Gaussians} \label{app:integral}
Using the same notation as in the main text
\begin{equation}
    G(x; \mu, Q) = e^{-\half (x - \mu)^T Q (x - \mu)}
\end{equation}
we will start by showing a general identity for the product of multivariate Gaussians in \cref{lemma:product of gaussians}. This result is then directly applied in the proof of \cref{lemma:special integral} which is the main calculation referenced in \cref{sec:analysis of pi} of the main text.

\begin{lemma} \label{lemma:product of gaussians}
    Let $\delta \in \mathbb{R}^N$, $A, B \in \mathbb{R}^{N \times N}$ be symmetric matrices and assume that their sum $V = A + B$ is invertible ($A$ and $B$ do not have to be individually invertible). Then
    \begin{gather}
        G(x; \delta, A) G(x; 0, B) = G(0; \delta, D) G(x; \alpha, V)
            \intertext{where}
        V = A + B
            ,\hspace{2mm}
        D = A V^{-1} B
            ,\hspace{2mm} \text{and} \hspace{2mm}
        \alpha = V^{-1} A \delta
            .
    \end{gather}
\end{lemma}
\begin{proof}
    Consider the combined exponent of the product, and complete the square (using that $A$ and $V$ are symmetric and $V$ invertible):
    \begin{gather}
        (x - \delta)^T A (x - \delta) + x^T B x
            =
        x^T (A + B) x - 2 x^T A \delta + \delta^T A \delta
            =
        x^T V x - 2 x^T V (V^{-1} A \delta) + \delta^T A \delta
            \\ =
        x^T V x - 2 x^T V \alpha + \delta^T A \delta
            =
        (x - \alpha)^T V (x - \alpha) - \alpha^T V \alpha + \delta^T A \delta
            .
    \end{gather}
    Consider further the second and third terms
    \begin{gather}
        - \alpha^T V \alpha + \delta^T A \delta 
            =
        - \delta^T A^T V^{-T} V V^{-1} A \delta + \delta^T A \delta
            =
        \delta^T (A - A V^{-1} A) \delta
            =
        \delta^T A V^{-1} (V - A) \delta
            \\ =
        \delta^T A V^{-1} B \delta
            =
        \delta^T D \delta
            .
    \end{gather}
    Altogether we find that the combined exponent exactly match that which we are after:
    \begin{equation}
        (x - \delta)^T A (x - \delta) + x^T B x
            =
        (x - \alpha)^T V (x - \alpha) + \delta^T D \delta
            .
    \end{equation}
\end{proof}

\begin{lemma}[Special integral of product of Gaussians] \label{lemma:special integral}
    Let $\mu \in \mathbb{R}$, $\vb{\mu} \in \mathbb{R}^2$ and $q \in \mathbb{R}_{>0}$, $Q \in \mathbb{R}^{2 \times 2}$ where $Q$ is symmetric and positive definite. Then the following formula holds:
    \begin{equation}
        \frac{1}{\sqrt{2\pi}} \int \odif{x_1} G(\vb{x}; \vb{\mu}, Q) G(x_1; \mu, q)
            =
        \frac{1}{N} G(0; \mu_1-\mu, \rho^{-2}) G(x_2; \nu, \sigma^{-2})
    \end{equation}
    with the following definitions
    \begin{align}
        N = \sqrt{Q_{11} + q}
            ,\quad
        \rho^2 = \frac{1}{q} + \frac{Q_{22}}{\det{Q}}
            ,\quad
        \nu = \mu_2 + \frac{q Q_{12}}{\det{Q} + q Q_{22}} (\mu_1 - \mu)
            ,\qq{and}
        \sigma^2= \frac{Q_{11} + q}{\det{Q} + q Q_{22}}
            .
    \end{align}
\end{lemma}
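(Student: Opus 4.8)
The plan is to reduce this one-dimensional integral over a product of Gaussians to the algebraic factorisation already established in \cref{lemma:product of gaussians}, followed by a single Gaussian marginalisation. The essential device is to regard the scalar factor $G(x_1; \mu, q)$ as a \emph{degenerate} two-dimensional Gaussian in $\vb{x} = (x_1, x_2)^T$, namely $G(\vb{x}; \tilde{\vb{\mu}}, \tilde{Q})$ with mean $\tilde{\vb{\mu}} = (\mu, 0)^T$ and rank-one precision $\tilde{Q} = \mat{q&0\\0&0}$. This is exactly the situation \cref{lemma:product of gaussians} was built to handle, since it permits the two precision matrices to be individually non-invertible, requiring only that their sum be invertible.

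First I would translate the integration variable by $\tilde{\vb{\mu}}$ so that the degenerate factor is centred at the origin, bringing the integrand into the form $G(\vb{y}; \vb{\delta}, Q)\, G(\vb{y}; 0, \tilde{Q})$ with $\vb{\delta} = (\mu_1 - \mu,\, \mu_2)^T$. Applying \cref{lemma:product of gaussians} with $A = Q$ and $B = \tilde{Q}$ then factors the integrand as $G(0; \vb{\delta}, D)\, G(\vb{y}; \vb{\alpha}, V)$, where $V = Q + \tilde{Q} = \mat{Q_{11}+q & Q_{12}\\ Q_{12} & Q_{22}}$ is positive definite, hence invertible, because $Q$ is positive definite and $\tilde{Q}$ positive semidefinite, and $D = Q V^{-1}\tilde{Q}$, $\vb{\alpha} = V^{-1} Q \vb{\delta}$. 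The overlap factor $G(0; \vb{\delta}, D)$ is constant in the integration variable and pulls outside the integral.

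What remains is $\frac{1}{\sqrt{2\pi}}\int \odif{y_1} G(\vb{y}; \vb{\alpha}, V)$, the marginal of a two-dimensional Gaussian over its first coordinate. Completing the square in $y_1$ produces the elementary integral $\sqrt{2\pi/V_{11}}$ together with a residual Gaussian in $y_2 = x_2$ of precision $\det V / V_{11}$ centred at $\alpha_2$. Cancelling the $2\pi$ against the prefactor gives the normalisation $1/\sqrt{V_{11}} = 1/\sqrt{Q_{11}+q} = 1/N$, and using $\det V = \det Q + q Q_{22}$ identifies the output width as $\sigma^2 = (Q_{11}+q)/(\det Q + q Q_{22})$.

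The final bookkeeping is to match the two remaining parameters against the claimed expressions. A short computation of the second component $\alpha_2 = (V^{-1}Q\vb{\delta})_2$ collapses, again using $\det V = \det Q + q Q_{22}$, to $\mu_2 + \frac{q Q_{12}}{\det Q + q Q_{22}}(\mu_1 - \mu) = \nu$. For the overlap factor, the crucial simplification — and the step where the degenerate structure of $\tilde{Q}$ does the real work — is that $\tilde{Q}\vb{\delta}$ excites only the first coordinate, so that $Q V^{-1}\tilde{Q}\vb{\delta}$ turns out to be supported on the first coordinate alone with value proportional to $\det Q$; consequently $\vb{\delta}^T D \vb{\delta} = \frac{q \det Q}{\det V}(\mu_1-\mu)^2 = (\mu_1-\mu)^2/\rho^2$, reproducing $G(0; \mu_1-\mu, \rho^{-2})$ with $\rho^2 = \frac1q + \frac{Q_{22}}{\det Q}$. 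I anticipate no genuine obstacle; the only points demanding care are justifying the degenerate embedding (secured by the invertibility of $V$ on its own) and tracking the normalisation constant faithfully through the marginalisation.
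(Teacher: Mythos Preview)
Your proposal is correct and follows essentially the same route as the paper: embed the scalar Gaussian as a rank-one two-dimensional Gaussian with precision $\tilde{Q} = \mat{q&0\\0&0}$, translate so that this factor is centred at the origin, apply \cref{lemma:product of gaussians}, and then marginalise the remaining two-dimensional Gaussian over $x_1$. The only cosmetic difference is that the paper phrases the marginalisation via the standard covariance-deletion rule for multivariate normals rather than completing the square explicitly, and it computes $D$ as a matrix (observing it projects onto the first coordinate) rather than tracking $D\vb{\delta}$ directly; the computations and final identifications are otherwise identical.
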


\begin{proof}
    To ease notation, we start by labelling the integral by $I$. By a simple translational transformation we find:
    \begin{equation}
        I = \frac{1}{\sqrt{2\pi}} \int \odif{x_1} G(\vb{x}; \vb{\mu}, Q) G(x_1; \mu, q)
            =
        \frac{1}{\sqrt{2\pi}} \int \odif{x_1} G(\vb{x}; \vb{\delta}, Q) G(x_1; 0, q)
    \end{equation}
    where we have introduced variable $\vb{\delta} = (\mu_1 - \mu, \mu_2)^T$. Now introduce the following definitions
    \begin{equation}
        Q' = \mat{q & 0 \\ 0 & 0}
            ,\quad
        V = Q + Q'
            ,\quad
        D = Q V^{-1} Q'
            ,\qq{and}
        \vb{\alpha} = V^{-1} Q \vb{\delta}
            .
    \end{equation}
    With these we invoke \cref{lemma:product of gaussians} which gives
    \begin{equation}
        I = G(\vb{0}; \vb{\delta}, D) \frac{1}{\sqrt{2\pi}} \int \odif{x_1} G(\vb{x}; \vb{\alpha}, V).
    \end{equation}
    Notice that $V$ is indeed invertible which is necessary to invoke the lemma. This is seen by using the fact that $Q$ is positive definite and $q>0$, from which it follows that $\det{V} = \det{Q} + q Q_{22} > 0$. Now use the fact that the marginal distribution of a multivariate normal distribution is simply the normal distribution obtained by deleting all rows and columns corresponding to the integrated out variable. We get
    \begin{equation}
        I = \sqrt{\frac{\det(V^{-1})}{[V^{-1}]_{22}}} G(\vb{0}; \vb{\delta}, D) G(x_2; \alpha_2, [V^{-1}]_{22}^{-1}).
        \label{eq:obfuscated result}
    \end{equation}
    We now turn to compute the relevant values in this expression explicitly. Doing this, we find
    \begin{align}
        V^{-1} = \frac{1}{\det{V}} \mat{Q_{22} & -Q_{12} \\ -Q_{12} & Q_{11} + q}
            ,\quad
        \vb{\alpha} &= \frac{1}{\det{V}} \mat{\det{Q} & 0 \\ q Q_{12} & \det{V}} \vb{\delta}
            ,\qq{and}
        D = \frac{q \det{Q}}{\det{V}} \mat{1&0\\0&0}
            .
    \end{align}
    Notice that $D$ is a projector onto the first coordinate. Therefore, the two dimensional Gaussian involving $\vb{\delta}$ actually reduces to a one dimensional Gaussian exclusively in the first coordinate.
    \begin{equation}
        G(\vb{0}; \vb{\delta}, D) = G(0; \delta_1, D_{11}).
    \end{equation}
    With these \cref{lemma:special integral} now follows directly from \cref{eq:obfuscated result}.
\end{proof}

\end{document}